\documentclass[11pt,a4paper]{article}

\usepackage{cmap}
\usepackage[T1]{fontenc}
\usepackage[utf8]{inputenc}
\usepackage[english]{babel}

\usepackage{amsmath,amssymb,amsthm}
\usepackage{mathtools}
\numberwithin{equation}{section}
\usepackage{graphicx}
\usepackage{tikz}
\usetikzlibrary{arrows.meta,positioning,calc,shapes.geometric,decorations.pathreplacing}
\usepackage[margin=1in]{geometry}
\usepackage[colorlinks=true,linkcolor=blue,citecolor=blue,urlcolor=blue]{hyperref}

\hypersetup{
  pdftitle={Henstock--Kurzweil Path Integral in Financial Mathematics: A Machine-Verified Pricing of European and Barrier Options},
  pdfauthor={Yury N. Berdinsky and A. S. Ushakov}
}

\newtheorem{theorem}{Theorem}
\newtheorem{lemma}{Lemma}

\theoremstyle{definition}
\newtheorem{definition}{Definition}
\newtheorem{remark}{Remark}

\newcommand{\R}{\mathbb{R}}

\newcommand{\Gker}{G}
\newcommand{\dd}{\,\mathrm{d}}
\newcommand{\dHK}{\mathrm{D}_{\mathrm{HK}}}
\newcommand{\Ncdf}{N}

\title{\bfseries Henstock--Kurzweil Path Integral in Financial Mathematics:
A Machine-Verified Pricing of European and Barrier Options}
\author{Yury N. Berdinsky\\[2mm]
Saint Petersburg State University, Faculty of Physics,\\
Department of High-Energy Physics and Elementary Particles\\[1mm]
\texttt{propagator2007@yandex.ru}
\and
Alexander S. Ushakov\\[1mm]
V.~A.~Fock Institute of Physics\\ \texttt{al.s.ushakov@yandex.ru}}
\date{22 July 2026}

\begin{document}
\sloppy
\maketitle

\begin{abstract}
We apply the Gaussian--kernel Henstock--Kurzweil (HK) path-integral machinery
developed in earlier parts of this programme to the first major application outside
quantum mechanics: the \textbf{Black--Scholes formula} for European option pricing.
Under the risk-neutral measure the log-price $x_t=\log S_t$ is a Brownian motion with
drift $\nu=r-\sigma^2/2$ and volatility $\sigma>0$; its transition density is the
Gaussian kernel with drift
\[
  \Gker_t(x,y)=(2\pi\sigma^2 t)^{-1/2}\,
     \exp\!\Big(-\frac{(y-x-\nu t)^2}{2\sigma^2 t}\Big),\qquad t>0 .
\]
We give a complete, machine-checked \textbf{Lean~4 / Mathlib} formalization of the
construction: the Chapman--Kolmogorov (semigroup) law $\int_\R \Gker_s(x,z)\Gker_t(z,y)\dd z=\Gker_{s+t}(x,y)$;
the fact that $\Gker_t$ is a probability density; the strong continuity of the
evolution operator $(P_t\psi)(x)=\int_\R \Gker_t(x,y)\psi(y)\dd y$; and, as the main
result, the closed-form price of a European call obtained \emph{directly} from the
kernel,
\[
  C=e^{-rT}\!\int_\R \max(e^y-K,0)\,\Gker_T(\log S_0,y)\dd y
   = S_0\,\Ncdf(d_1)-K e^{-rT}\Ncdf(d_2),
\]
with the standard $d_1,d_2$ and $\Ncdf$ the standard normal CDF.  We also show that
the drift--diffusion Chernoff splitting is \emph{exact} at every level, giving a
rigorous path-integral derivation.  All statements in the accompanying file
\texttt{HkBlackScholes.lean} are \texttt{sorry}-free and depend only on the standard
axioms \texttt{propext}, \texttt{Classical.choice}, \texttt{Quot.sound}.
\end{abstract}

\tableofcontents

\section{Introduction}

Option pricing theory and Feynman’s path integral share a fundamental analytic
similarity: in both cases the desired quantity (an option price or a transition
amplitude) is expressed as the expectation of a functional over trajectories of an
underlying diffusion process weighted by a Gaussian density~\cite{FeynmanHibbs1965}.  This paper carries the
Henstock--Kurzweil (HK) gauge-integral machinery~\cite{Kurzweil1957,Henstock1961} of the earlier parts of this programme
into mathematical finance, giving a fully machine-verified derivation of the
Black--Scholes formula directly from the Gaussian cylindrical kernel.  Because the
intended readership includes both mathematicians new to option pricing and
practitioners new to gauge integration, we begin with a self-contained account of the
ingredients.

\paragraph{The Black--Scholes model.}
The Black--Scholes model describes a frictionless market with a riskless bond earning
the constant interest rate $r$ and a single risky stock $S$ whose price follows a
\emph{geometric Brownian motion}.  Under the risk-neutral (equivalent martingale)
measure $Q$ the stock price solves the stochastic differential equation~\cite{BlackScholes1973,Merton1973},
\[
  \dd S_t = r\,S_t\,\dd t + \sigma\,S_t\,\dd W_t^{Q}.
\]
where $\sigma>0$ is the volatility and $W^{Q}$ a standard $Q$-Brownian motion.  The
fundamental theorem of asset pricing represents the fair price of a European contingent
claim with payoff $H(S_T)$ as the discounted risk-neutral expectation~\cite{HarrisonPliska1981},
\[
  \Pi_0=e^{-rT}\,\mathbb{E}^{Q}\!\big[H(S_T)\big]
\]
and for the European call $H(S)=\max(S-K,0)$ with strike $K$ and maturity $T$ this is
the quantity we evaluate.  By It\^o's formula the log-price $x_t=\log S_t$ is a Brownian
motion with constant drift $\nu=r-\sigma^2/2$ and diffusion $\sigma$, that is
$\dd x_t=\nu\,\dd t+\sigma\,\dd W_t^{Q}$, so that $x_T\sim\mathcal N(x_0+\nu T,\sigma^2 T)$
with $x_0=\log S_0$.

\paragraph{Functional-integral representation.}
Averaging the payoff over the trajectories of the underlying diffusion turns the
risk-neutral expectation into a \emph{functional (path) integral} over the space
$C([0,T])$ of continuous log-price paths:
\begin{equation}\label{eq:pathprice}
  C=e^{-rT}\,\mathbb{E}^{Q}\!\big[\max(S_T-K,0)\big]
   =e^{-rT}\!\int_{C([0,T])}\max\!\big(e^{x_T}-K,0\big)\,\dd\mu_W^{Q}(x(\cdot)),
\end{equation}
where $\mu_W^{Q}$ is the Wiener measure of the drifted Brownian motion $x_t$, i.e.\ the
Wiener measure with drift $\nu=r-\sigma^2/2$.  This is the exact analogue of the Feynman
path integral of quantum mechanics, with the oscillatory weight $e^{iS/\hbar}$ replaced
by the real, positive Wiener weight.

\paragraph{The Henstock--Kurzweil gauge integral.}
The Wiener measure in \eqref{eq:pathprice} is a genuine countably additive measure, but
the same average can be constructed \emph{without} measure theory, as a
Henstock--Kurzweil (HK) \emph{gauge integral} over cylindrical partitions of path space,
following Muldowney~\cite{Muldowney2012} and Gill and Esposito~\cite{EspositoGill2024}.
We fix a finite set of times $0<t_1<\dots<t_m\le T$, groups paths into \emph{cylinder
sets} determined by the sampled values $(x_{t_1},\dots,x_{t_m})$, and weights each cell
by the finite-dimensional Gaussian transition kernel; a \emph{gauge}
$\delta:C([0,T])\to(0,\infty)$ controls the fineness of the tagged partition.  The HK construction on path space assigns to each cylinder set the product of
the Gaussian transition densities (3.2).  The cylindrical HK measure obtained in
this way, denoted $\dHK^{(Q)}$, reduces the infinite-dimensional gauge limit to an
ordinary finite-dimensional integral on cylinder functions.  When the payoff depends only on the
terminal value $x_T$ a single time slice suffices and the reduction is \emph{exact}:
\begin{equation}\label{eq:hkreduction}
   \int_{C([0,T])} \Phi(x_T)\,\dHK^{(Q)}x
     = \int_{\R} \Phi(y)\,\Gker_T(x_0,y)\,\dd y ,
\end{equation}
where $\Gker_T$ is the Gaussian transition kernel with drift (Definition~\ref{def:kernel})
and $x_0=\log S_0$.  Equation~\eqref{eq:hkreduction} is the rigorous bridge between the
classical path integral~\eqref{eq:pathprice} and the finite-dimensional computation of
this paper.  Applying it to $\Phi(y)=\max(e^{y}-K,0)$ gives
\begin{equation}\label{eq:kernelprice}
  C=e^{-rT}\!\int_\R \max(e^y-K,0)\,\Gker_T(\log S_0,y)\,\dd y ,
\end{equation}
which is the object we evaluate in closed form.

Economically, the functional integral~\eqref{eq:pathprice} is a sum over all possible
price trajectories of the underlying asset, each weighted by its risk-neutral
probability.  The cylindrical HK integral makes this sum mathematically rigorous
without requiring the full machinery of measure theory: the gauge $\delta$ adapts the
resolution of the trajectory space so that only the finitely many time-slices relevant
for the payoff are resolved, while the remaining dimensions are integrated out
analytically.  This viewpoint becomes decisive for path-dependent options --- barrier,
Asian, lookback --- where the payoff genuinely depends on the entire trajectory and the
infinite-dimensional structure of the HK integral is essential.

Unlike the oscillatory Feynman integrals of the earlier parts of this programme --
which are non-absolutely convergent and lie outside Mathlib's Bochner integral -- the
Black--Scholes kernel is a genuine, absolutely convergent Gaussian probability
density.  Consequently every step of \eqref{eq:kernelprice} can be, and here is,
machine-verified in Lean~4/Mathlib without recourse to analytic continuation.

\paragraph{Earlier parts of this programme.}
This paper is the third in a series developing the HK functional integral and its
applications.  The first, Berdinsky~\cite{Berdinsky2026I} (\emph{Henstock--Kurzweil Path
Integrals in Real Time: A Machine-Verified Construction on Kuelbs--Steadman Spaces}),
constructs the HK functional integral for a free non-relativistic particle in real time
(without Wick rotation) and proves the corresponding Trotter product formula, all
machine-verified in Lean.  The second, Berdinsky and Ushakov~\cite{Berdinsky2026II}
(\emph{Chernoff Approximations for Henstock--Kurzweil Path Integrals: From Bounded
Generators to the Harmonic Oscillator}; \href{https://doi.org/10.5281/zenodo.21479996}{DOI: 10.5281/zenodo.21479996}), establishes the Chernoff product formula in the
HK setting, derives the Mehler kernel of the harmonic oscillator, and proves a
universality result covering both Brownian and financial integrals.  The present work
specializes that machinery to option pricing.

\paragraph{Structure of the paper.}
Section~\ref{sec:kernel} introduces the Gaussian transition kernel with drift, proves
that it is a probability density obeying the Chapman--Kolmogorov (semigroup) law, and
establishes the drift-shift (Girsanov-type) invariance of the HK measure.
Section~\ref{sec:evolution} defines the evolution (pricing) operator and proves its
strong continuity.  Section~\ref{sec:bs} contains the main result: the closed-form
Black--Scholes price obtained directly from the kernel.  Section~\ref{sec:chernoff}
shows the exactness of the Chernoff product; Section~\ref{sec:examples}
illustrates the universality of the method through the digital option and the
barrier probability; and Section~\ref{sec:lean} summarises the Lean
formalisation.  All formal
statements reside in the accompanying file \texttt{HkBlackScholes.lean}.

\paragraph{Sign convention.}
We use the mathematically correct \emph{forward} transition density, centred at
$y=x+\nu t$, so that $x_T\sim\mathcal N(x_0+\nu T,\sigma^2T)$.  (An exponent written
as $-(x-y-\nu t)^2$ would reverse the drift and fail to reproduce the standard formula
with $d_2=(x_0-\log K+\nu T)/(\sigma\sqrt T)$; the two differ only by the sign of the
drift term.)

\section{The Gaussian kernel with drift}\label{sec:kernel}

\begin{definition}[Transition kernel]\label{def:kernel}
Let $r\in\R$ be the risk-free rate, $\sigma>0$ the volatility, and $\nu:=r-\sigma^2/2$
the risk-neutral log-drift.  For $t>0$ define the Gaussian transition kernel
\[
  \Gker_t(x,y)\;=\;\Gker(\nu,\sigma,t;x,y)\;=\;
    \frac{1}{\sqrt{2\pi\sigma^2 t}}\,
    \exp\!\Big(-\frac{(y-x-\nu t)^2}{2\sigma^2 t}\Big).
\]
\end{definition}

In Lean this is \texttt{Gdrift}, and $\Ncdf$ is \texttt{stdNormalCDF}:
\[
  \Ncdf(d)=\int_{-\infty}^{d}\frac{e^{-u^2/2}}{\sqrt{2\pi}}\dd u .
\]

\begin{lemma}[Total mass]\label{lem:mass}
For $\sigma>0$ and $t>0$, $\displaystyle\int_\R \Gker_t(x,y)\dd y=1$; i.e.\ $\Gker_t$
is a probability density in $y$.
\end{lemma}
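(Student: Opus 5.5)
The plan is to reduce the statement to the standard Gaussian integral $\int_\R e^{-a u^2}\dd u=\sqrt{\pi/a}$ for $a>0$ (in Mathlib, \texttt{integral\_gaussian}) by a translation in the integration variable. Fix $x\in\R$, $\sigma>0$, $t>0$, and put $c:=x+\nu t$. Then
\[
  \Gker_t(x,y)=\frac{1}{\sqrt{2\pi\sigma^2 t}}\,\exp\!\big(-a\,(y-c)^2\big),\qquad a:=\frac{1}{2\sigma^2 t}>0 ,
\]
so the integrand is the constant $(2\pi\sigma^2 t)^{-1/2}$ times a translate of the centred Gaussian $u\mapsto e^{-a u^2}$.

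The steps are as follows. First, pull the constant prefactor out of the integral by linearity (\texttt{integral\_mul\_left}). Second, substitute $u=y-c$. Lebesgue measure is translation invariant, so $\int_\R e^{-a(y-c)^2}\dd y=\int_\R e^{-a u^2}\dd u$; in Mathlib this is \texttt{integral\_sub\_right\_eq\_self}. Third, evaluate $\int_\R e^{-au^2}\dd u=\sqrt{\pi/a}=\sqrt{2\pi\sigma^2 t}$. Finally, multiply by the prefactor to get $1$.

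For the ``probability density'' part of the statement, nonnegativity is immediate, since the kernel is a positive constant times an exponential. Integrability, which is needed for the Bochner integral to be meaningful rather than silently $0$, follows from \texttt{integrable\_exp\_neg\_mul\_sq} for $a>0$, composed with the translation.

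I expect the main obstacle to be the algebraic bookkeeping rather than any analysis. One has to rewrite the exponent $-(y-x-\nu t)^2/(2\sigma^2 t)$ in the form $-a\,(y-c)^2$ with exactly the $a$ that the Gaussian lemma expects. One also has to check that $\sqrt{\pi/a}$ cancels against $1/\sqrt{2\pi\sigma^2 t}$, which uses $\sqrt{\pi\cdot 2\sigma^2 t}=\sqrt{2\pi\sigma^2 t}$ and $\sqrt{\cdot}\neq 0$ from $\sigma,t>0$. I would handle this with \texttt{field\_simp} and \texttt{ring} after establishing the positivity facts $\sigma^2 t>0$ and $2\pi\sigma^2 t>0$.
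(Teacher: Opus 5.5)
Your proposal is correct and follows essentially the same route as the paper: translate by the mean $x+\nu t$, evaluate the centred Gaussian $\int_\R e^{-u^2/(2\sigma^2 t)}\dd u=\sqrt{2\pi\sigma^2 t}$ with the standard Gaussian integral (\texttt{integral\_gaussian}), and cancel the normalisation. The paper's sketch also mentions a rescaling, but your computation shows it is not needed once the Gaussian lemma is applied with $a=1/(2\sigma^2 t)$.
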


\begin{proof}[Proof sketch]
Translate by the mean $x+\nu t$ and rescale; the integral becomes the standard
Gaussian $\int_\R e^{-y^2/(2\sigma^2t)}\dd y=\sqrt{2\pi\sigma^2t}$, cancelling the
normalisation.  (Lean: \texttt{Gdrift\_integral\_eq\_one}, from
\texttt{integral\_gaussian} and translation invariance.)
\end{proof}

The following theorem establishes the semigroup property of the transition kernel,
which is the exact path-analogue of the Chapman--Kolmogorov equation for Markov
processes.

\begin{theorem}[Chapman--Kolmogorov / semigroup law]\label{thm:ck}
For $\sigma>0$ and $s,t>0$,
\[
  \int_\R \Gker_s(x,z)\,\Gker_t(z,y)\dd z=\Gker_{s+t}(x,y).
\]
\end{theorem}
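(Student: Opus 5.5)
The plan is to reduce the drifted kernels to centred Gaussians and then complete the square in the convolution variable. First observe that $\Gker_t(x,y)=g_{\sigma^2 t}(y-x-\nu t)$, where $g_v(u)=(2\pi v)^{-1/2}e^{-u^2/(2v)}$ is the centred Gaussian density of variance $v>0$. Substituting $w=z-x-\nu s$ in the left-hand side (translation invariance of Lebesgue measure) gives
\[
\int_\R \Gker_s(x,z)\Gker_t(z,y)\dd z=\int_\R g_a(w)\,g_b(c-w)\dd w,
\]
with $a=\sigma^2 s$, $b=\sigma^2 t$ and $c=y-x-\nu(s+t)$. Since $\Gker_{s+t}(x,y)=g_{a+b}(c)$, the drift disappears entirely: $\nu s+\nu t=\nu(s+t)$ is exactly the additivity of the drift. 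The theorem therefore reduces to the classical identity $g_a*g_b=g_{a+b}$.

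To prove that identity, complete the square in $w$:
\[
\frac{w^2}{a}+\frac{(c-w)^2}{b}=\frac{a+b}{ab}\Big(w-\frac{ac}{a+b}\Big)^2+\frac{c^2}{a+b}.
\]
This is a purely algebraic identity, which in Lean is handled by \texttt{field\_simp} followed by \texttt{ring}, using $a,b>0$. The factor $e^{-c^2/(2(a+b))}$ then comes out of the integral. The remaining integral is a Gaussian in $w$ with variance $ab/(a+b)$, shifted by $ac/(a+b)$. By translation invariance and Lemma~\ref{lem:mass}, or directly by \texttt{integral\_gaussian}, it equals $\sqrt{2\pi ab/(a+b)}$. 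Collecting the constants gives
\[
\frac{\sqrt{2\pi ab/(a+b)}}{2\pi\sqrt{ab}}=\frac{1}{\sqrt{2\pi(a+b)}},
\]
which is the normalisation of $g_{a+b}$.

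The main obstacle is the bookkeeping, not the ideas. First, the integrand must be rewritten pointwise as a constant times $\exp(-\alpha(w-m)^2)$ before the integral lemma can be applied. Second, the square roots must be manipulated correctly: $\sqrt{2\pi a}\sqrt{2\pi b}=2\pi\sqrt{ab}$, and $\sqrt{ab/(a+b)}/\sqrt{ab}=1/\sqrt{a+b}$, which need \texttt{Real.sqrt\_mul} and \texttt{Real.sqrt\_div} together with positivity side conditions. I would state the pointwise exponent identity as a separate lemma and prove it by \texttt{ring\_nf} after clearing denominators. Then I would use \texttt{integral\_mul\_left} and \texttt{integral\_add\_right\_eq\_self} to reduce to Mathlib's \texttt{integral\_gaussian}, which gives $\int e^{-\alpha w^2}=\sqrt{\pi/\alpha}$. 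Integrability is automatic in this argument, because after the rewriting every integral involved is an explicit Gaussian.
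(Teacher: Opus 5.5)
Your proposal is correct and follows essentially the same route as the paper: complete the square in the intermediate variable, evaluate the resulting Gaussian integral, and combine the prefactors into $(2\pi\sigma^2(s+t))^{-1/2}$. Your preliminary translation $w=z-x-\nu s$, which removes the drift and reduces the claim to $g_a*g_b=g_{a+b}$, is only a cosmetic reorganisation of the paper's direct completion of the square with $A=2\sigma^2 s$, $B=2\sigma^2 t$.
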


\begin{proof}[Proof sketch]
Set $A=2\sigma^2 s$, $B=2\sigma^2 t$.  Completing the square in $z$ writes the sum of
the two exponents as $-(z-z_0)^2/V-(y-x-\nu(s+t))^2/(A+B)$ with $V=AB/(A+B)$ and a
$z$-independent remainder.  Integrating the pure Gaussian in $z$ gives
$\sqrt{\pi V}$, and the prefactors combine, via $A+B=2\sigma^2(s+t)$ and
multiplicativity of $\sqrt{\cdot}$, to the single normalisation
$(2\pi\sigma^2(s+t))^{-1/2}$.  (Lean: \texttt{Gdrift\_semigroup}.)
\end{proof}

Theorem~\ref{thm:ck} says the time-slices of the diffusion compose: the HK cylindrical
integral is consistent under refinement of the time grid, which is exactly what makes
the finite-dimensional reduction \eqref{eq:kernelprice} well defined.

In the gauge-integral framework a change of measure is replaced by an algebraic
shift of the tag points.  The kernel invariance stated below is the gauge-integral
counterpart of Girsanov’s theorem, requiring no Lebesgue-measure theory.

\begin{lemma}[Drift-shift invariance; Girsanov-type]\label{lem:drift}
For all $\nu,\nu',\sigma,t,x\in\R$ and every payoff $f:\R\to\R$,
\[
  \int_\R f(y)\,\Gker(\nu,\sigma,t;x,y)\,\dd y
   =\int_\R f\big(y-(\nu'-\nu)t\big)\,\Gker(\nu',\sigma,t;x,y)\,\dd y .
\]
\end{lemma}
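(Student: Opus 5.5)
The plan is to reduce the statement to translation invariance of the Lebesgue integral on $\R$. The kernel identity that drives it is purely algebraic:
\[
  \Gker(\nu',\sigma,t;x,y)=\Gker\big(\nu,\sigma,t;x,\,y-(\nu'-\nu)t\big).
\]
To see this, note that $y-x-\nu' t=(y-(\nu'-\nu)t)-x-\nu t$. So the exponents coincide, and the prefactor $(2\pi\sigma^2t)^{-1/2}$ does not involve the drift at all. Since the statement quantifies over all real $\nu,\nu',\sigma,t$, this identity should be proved as a literal equality of the defining expressions, with no positivity hypotheses; in Lean it is a \texttt{ring\_nf}/\texttt{congr} computation inside \texttt{Real.exp}.

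Next, set $c=(\nu'-\nu)t$. Using the identity, the right-hand side becomes
\[
  \int_\R f(y-c)\,\Gker(\nu,\sigma,t;x,y-c)\dd y .
\]
This is $\int_\R g(y-c)\dd y$ with $g(u)=f(u)\,\Gker(\nu,\sigma,t;x,u)$. By translation invariance of Lebesgue measure (Mathlib: \texttt{integral\_sub\_right\_eq\_self}), this equals $\int_\R g(u)\dd u$, which is the left-hand side.

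The main obstacle I anticipate is integrability, because $f$ is an arbitrary function $\R\to\R$. Fortunately translation invariance of the Bochner integral holds unconditionally. If $g$ is not integrable, then neither is $g(\cdot-c)$, because translation is a measure-preserving equivalence. In that case both sides are $0$ by convention and the identity still holds, and the Mathlib lemma already covers this case since it is stated for arbitrary functions. The same reasoning handles the degenerate parameters $\sigma=0$ or $t\le 0$: the kernel formula still defines some function, the algebraic identity above holds verbatim, and nothing is divided in a way that breaks the equality. No case split is needed, and no appeal to Lemma~\ref{lem:mass} or Theorem~\ref{thm:ck} is required. The whole proof is rewrite-by-kernel-identity followed by change of variables $u=y-c$.
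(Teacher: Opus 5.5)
Your proposal is correct and is essentially the paper's proof. Your kernel identity $\Gker(\nu',\sigma,t;x,y)=\Gker(\nu,\sigma,t;x,y-(\nu'-\nu)t)$ is the paper's mean-shift identity with $y$ replaced by $y-(\nu'-\nu)t$, followed by the same unconditional translation invariance of the integral, with no positivity or integrability hypotheses.
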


\begin{proof}[Proof sketch]
The identity is translation invariance of Lebesgue measure combined with the pointwise
mean-shift identity $\Gker(\nu',\sigma,t;x,\,y+(\nu'-\nu)t)=\Gker(\nu,\sigma,t;x,y)$
(both kernels are the same Gaussian, their means differing by $(\nu'-\nu)t$).  No
positivity or integrability hypotheses are required.  (Lean: \texttt{drift\_shift}.)
\end{proof}

\begin{remark}[Financial interpretation]
Choosing $\nu'=r-\sigma^2/2$ (risk-neutral drift) and $\nu_{\mathrm{hist}}=\mu$ (historical drift) shows
that the risk-neutral expectation is obtained from the historical one by shifting the
payoff by the risk premium $(\mu-r+\sigma^2/2)\,t$, which in the HK framework is simply a
change of the tag points.  This is the gauge-integral analogue of Girsanov's theorem.
(The shift $-(\nu'-\nu_{\mathrm{hist}})t=(\mu-r+\sigma^2/2)t$ is precisely the risk premium times time;
the forward sign of the shift, matching the forward transition density above, is the one
verified in Lean.)
\end{remark}

\section{The evolution operator and strong continuity}\label{sec:evolution}

\begin{definition}[Evolution / pricing operator]
For $\psi:\R\to\R$ set
\[
  (P_t\psi)(x)=\int_\R \Gker_t(x,y)\,\psi(y)\dd y .
\]
\end{definition}

$P_t$ is the risk-neutral conditional-expectation operator: $(P_T\psi)(\log S_0)$ is
the undiscounted expected payoff of a European claim with payoff $\psi(\log S_T)$.  By
Theorem~\ref{thm:ck}, $\{P_t\}_{t>0}$ is a semigroup.

\begin{lemma}[Rescaling]\label{lem:rescale}
For $t>0$,
\[
  (P_t\psi)(x)=\int_\R \frac{e^{-u^2/2}}{\sqrt{2\pi}}\;
     \psi\big(x+\nu t+\sigma\sqrt t\,u\big)\dd u .
\]
\end{lemma}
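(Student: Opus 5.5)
The plan is to prove Lemma~\ref{lem:rescale} by a single affine change of variables in the defining integral of $P_t\psi$. Set $m:=x+\nu t$ and $s:=\sigma\sqrt t>0$; this uses $\sigma>0$, $t>0$. With this notation
\[
  \Gker_t(x,y)=\frac{1}{\sqrt{2\pi}\,s}\exp\!\Big(-\frac{(y-m)^2}{2s^2}\Big),
\]
because $\sqrt{2\pi\sigma^2t}=\sqrt{2\pi}\,\sigma\sqrt t$. This step uses multiplicativity of $\sqrt{\cdot}$ on nonnegative reals and $\sqrt{\sigma^2}=\sigma$, which holds since $\sigma>0$. I would first record this rewriting as a pointwise identity of the kernel.

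Next I substitute $y=m+su$, which is an affine bijection of $\R$ with constant Jacobian $s$. For Lebesgue (Bochner) integrals the change-of-variables formula for $u\mapsto m+su$ gives
\[
  \int_\R g(y)\dd y \;=\; |s|\int_\R g(m+su)\dd u \;=\; s\int_\R g(m+su)\dd u .
\]
In Mathlib this is \texttt{integral\_comp\_mul\_left} combined with \texttt{integral\_add\_left\_eq\_self}, which is the same translation invariance used in Lemma~\ref{lem:mass}. I apply it to $g(y)=\Gker_t(x,y)\psi(y)$. Then $(y-m)^2/(2s^2)=u^2/2$, so the factor $s$ cancels the $1/s$ in the prefactor. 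What remains is $\int_\R (2\pi)^{-1/2}e^{-u^2/2}\,\psi(x+\nu t+\sigma\sqrt t\,u)\dd u$, which is the claimed formula.

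An important point is that the identity holds with no integrability hypothesis on $\psi$. The scaling and translation formulas for the Bochner integral are equalities valid for every function: a non-integrable integrand is mapped to a non-integrable one, and both sides are then $0$. So, as in Lemma~\ref{lem:drift}, no conditions on $\psi$ are needed.

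The main obstacle is purely algebraic bookkeeping. One has to show that $\frac{1}{\sqrt{2\pi\sigma^2t}}\cdot\sigma\sqrt t=\frac1{\sqrt{2\pi}}$ and that $\frac{(\sigma\sqrt t\,u)^2}{2\sigma^2t}=\frac{u^2}{2}$, using $(\sqrt t)^2=t$ and $\sigma,t\neq0$. I would handle this with \texttt{field\_simp} after rewriting with \texttt{Real.sqrt\_mul}, \texttt{Real.sq\_sqrt} and \texttt{Real.sqrt\_sq}. The pointwise equality of integrands is proved first and then lifted under the integral by \texttt{integral\_congr}.
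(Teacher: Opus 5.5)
Your proposal is correct and follows essentially the same approach as the paper. Both use the affine substitution $y=x+\nu t+\sigma\sqrt t\,u$, whose Jacobian $\sigma\sqrt t$ cancels the extra normalisation while the exponent becomes $-u^2/2$. Your added remark that no integrability assumption on $\psi$ is needed, because the Bochner change-of-variables identities hold unconditionally, is a correct refinement in the spirit of Lemma~\ref{lem:drift}.
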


\begin{proof}[Proof sketch]
Substitute $y=x+\nu t+\sigma\sqrt t\,u$; the Jacobian $\sigma\sqrt t$ cancels the
extra normalisation and the exponent becomes $-u^2/2$.  (Lean:
\texttt{evolutionOp\_eq\_stdGaussian}.)
\end{proof}

\begin{theorem}[Strong continuity / approximate identity]\label{thm:strong}
Let $\psi$ be continuous and bounded ($|\psi|\le M$).  Then for every $x\in\R$,
\[
  (P_t\psi)(x)\longrightarrow \psi(x)\qquad(t\to0^+).
\]
\end{theorem}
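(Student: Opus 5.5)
The plan is to use the rescaled representation of Lemma~\ref{lem:rescale} and apply dominated convergence in the variable $u$. Write $\phi(u)=e^{-u^2/2}/\sqrt{2\pi}$. By Lemma~\ref{lem:rescale}, for every $t>0$,
\[
  (P_t\psi)(x)=\int_\R \phi(u)\,\psi\big(x+\nu t+\sigma\sqrt t\,u\big)\dd u .
\]
Lemma~\ref{lem:mass}, or the standard Gaussian integral, gives $\int_\R\phi=1$, so $\psi(x)=\int_\R\phi(u)\,\psi(x)\dd u$. Hence
\[
  (P_t\psi)(x)-\psi(x)=\int_\R \phi(u)\,\big[\psi(x+\nu t+\sigma\sqrt t\,u)-\psi(x)\big]\dd u .
\]

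I will show this tends to $0$ along any sequence $t_n\to0^+$, which suffices for the limit $t\to0^+$. The key steps are as follows.

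\begin{enumerate}
\item \emph{Pointwise convergence.} For each fixed $u$, the argument $x+\nu t_n+\sigma\sqrt{t_n}\,u$ tends to $x$, because $\nu t_n\to0$ and $\sqrt{t_n}\to0$. Continuity of $\psi$ at $x$ then makes the integrand tend to $0$.
\item \emph{Domination.} The integrand is bounded by $2M\phi(u)$, since $|\psi|\le M$. This bound does not depend on $n$.
\item \emph{Integrability.} The dominating function $2M\phi$ is integrable.
\item \emph{Conclusion.} The dominated convergence theorem gives convergence of the integrals to $0$.
\end{enumerate}

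In Lean, the natural route is \texttt{tendsto\_integral\_filter\_of\_dominated\_convergence} with the filter $\mathcal N_{>0}(0)$. This avoids passing to sequences altogether: the pointwise limit is established along the filter from continuity composed with the limit of $t\mapsto x+\nu t+\sigma\sqrt t\,u$.

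The main obstacle is the measurability side condition. The integrand must be \texttt{AEStronglyMeasurable} for $t$ eventually in the filter. This follows from continuity of $u\mapsto\psi(x+\nu t+\sigma\sqrt t\,u)$, which is a composition of $\psi$ with an affine map.

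A second, minor point is continuity of $t\mapsto\sqrt t$ at $0$ from the right, which Mathlib provides via \texttt{Real.continuous\_sqrt}.

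No uniform continuity of $\psi$ is needed, since the convergence is only claimed pointwise in $x$.
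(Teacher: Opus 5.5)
Your proposal is correct and follows the paper's proof essentially step for step: you rescale via Lemma~\ref{lem:rescale}, apply dominated convergence (in Lean, \texttt{tendsto\_integral\_filter\_of\_dominated\_convergence}) with a Gaussian dominating function, and use $\int_\R \phi = 1$. The only cosmetic difference is that you subtract $\psi(x)$ before taking the limit, dominating by $2M\phi$, whereas the paper passes directly to the pointwise limit $\phi(u)\psi(x)$ with bound $M\phi$ and then invokes the unit mass.
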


\begin{proof}[Proof sketch]
By Lemma~\ref{lem:rescale} the integrand converges pointwise to
$\tfrac{1}{\sqrt{2\pi}}e^{-u^2/2}\psi(x)$ (continuity of $\psi$ and
$x+\nu t+\sigma\sqrt t\,u\to x$) and is dominated by $M\tfrac{1}{\sqrt{2\pi}}e^{-u^2/2}$,
which is integrable.  Dominated convergence and $\int_\R \tfrac{1}{\sqrt{2\pi}}e^{-u^2/2}\dd u=1$
give the claim.  (Lean: \texttt{evolutionOp\_tendsto}, via
\texttt{tendsto\_integral\_filter\_of\_dominated\_convergence}.)
\end{proof}

\section{The Black--Scholes formula}\label{sec:bs}

\begin{theorem}[Black--Scholes price of a European call]\label{thm:bs}
Let $r\in\R$, $\sigma>0$, $T>0$, $S_0>0$, $K>0$, $\nu=r-\sigma^2/2$, and
$x_0=\log S_0$.  With
\[
  d_1=\frac{\log S_0-\log K+(\nu+\sigma^2)T}{\sigma\sqrt T},\qquad
  d_2=d_1-\sigma\sqrt T=\frac{\log S_0-\log K+\nu T}{\sigma\sqrt T},
\]
the price \eqref{eq:kernelprice} equals
\[
  \boxed{\;C=e^{-rT}\!\int_\R \max(e^y-K,0)\,\Gker_T(x_0,y)\dd y
      = S_0\,\Ncdf(d_1)-K e^{-rT}\Ncdf(d_2).\;}
\]
\end{theorem}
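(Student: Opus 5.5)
The plan is to reduce the kernel integral to two standard-Gaussian tail integrals via the rescaling Lemma~\ref{lem:rescale}, and then evaluate each in closed form. By Lemma~\ref{lem:rescale} with $\psi(y)=\max(e^y-K,0)$ and $x=x_0$,
\[
  \int_\R \max(e^y-K,0)\,\Gker_T(x_0,y)\dd y
  =\int_\R \frac{e^{-u^2/2}}{\sqrt{2\pi}}\max\!\big(S_0e^{\nu T+\sigma\sqrt T u}-K,0\big)\dd u .
\]
Since $u\mapsto S_0e^{\nu T+\sigma\sqrt T u}$ is strictly increasing, the positive part is nonzero exactly on $u>-d_2$. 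Indeed, $S_0e^{\nu T+\sigma\sqrt Tu}>K$ is equivalent to $u>(\log K-\log S_0-\nu T)/(\sigma\sqrt T)=-d_2$. So the integral equals $I_1-I_2$, where
\[
  I_1=S_0e^{\nu T}\int_{-d_2}^{\infty}\frac{e^{-u^2/2+\sigma\sqrt T u}}{\sqrt{2\pi}}\dd u,\qquad
  I_2=K\int_{-d_2}^{\infty}\frac{e^{-u^2/2}}{\sqrt{2\pi}}\dd u .
\]
Splitting the integral into these two pieces is legitimate because both are integrable, the first being dominated by a Gaussian times an exponential.

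For $I_2$, I will use the reflection $u\mapsto -u$ and the symmetry of the density to get $\int_{-d_2}^\infty=\int_{-\infty}^{d_2}=\Ncdf(d_2)$, so $I_2=K\Ncdf(d_2)$. For $I_1$, I complete the square,
\[
  -\tfrac{u^2}{2}+\sigma\sqrt T u=-\tfrac{(u-\sigma\sqrt T)^2}{2}+\tfrac{\sigma^2T}{2},
\]
and translate $v=u-\sigma\sqrt T$. The lower limit becomes $-d_2-\sigma\sqrt T=-d_1$, and reflecting again gives $I_1=S_0e^{\nu T+\sigma^2T/2}\Ncdf(d_1)=S_0e^{rT}\Ncdf(d_1)$, using $\nu+\sigma^2/2=r$. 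Multiplying $I_1-I_2$ by $e^{-rT}$ then yields $S_0\Ncdf(d_1)-Ke^{-rT}\Ncdf(d_2)$. Along the way I also check that $d_1-\sigma\sqrt T=d_2$ agrees with the stated $(\nu+\sigma^2)T$ form of $d_1$.

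The main obstacle, particularly for the Lean formalization, is the bookkeeping of half-line integrals rather than the algebra. This means rewriting $\int_\R \mathbf 1_{u>-d_2}(\cdots)$ as a set integral over $(-d_2,\infty)$, justifying that the integrand splits as a difference of integrable functions, and transporting set integrals under the affine maps $u\mapsto -u$ and $u\mapsto u-\sigma\sqrt T$, using translation and reflection invariance of Lebesgue measure. I would first establish one auxiliary identity,
\[
  \int_{a}^\infty \frac{e^{-u^2/2}}{\sqrt{2\pi}}\dd u=\Ncdf(-a),
\]
and then derive both $I_1$ and $I_2$ from it, so that the reflection argument is done only once.
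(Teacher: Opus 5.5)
Your proof is correct and follows essentially the paper's route. Both arguments restrict to the exercise region, split into two Gaussian tail integrals, complete the square for the $e^{y}$-weighted piece, and finish with an affine substitution plus the reflection $\int_b^\infty=\int_{-\infty}^{-b}$. The only difference is the order of bookkeeping: you standardize first via Lemma~\ref{lem:rescale} and complete the square in $u$. The paper instead stays in the $y$ variable, proves the tail identities \eqref{eq:tail}--\eqref{eq:exptail} for general $(m,a,c)$, and performs the substitution inside them.
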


\begin{proof}[Proof sketch]
Write $m=x_0+\nu T$ and $a=\sigma^2T$, so $\sqrt a=\sigma\sqrt T$.  Since $K>0$, the
payoff $\max(e^y-K,0)$ vanishes for $y\le\log K$ and equals $e^y-K$ for $y>\log K$;
hence the $\R$-integral restricts to $(\log K,\infty)$ and splits as
\[
  \int_{\log K}^{\infty}e^y\,\Gker_T\dd y-K\int_{\log K}^{\infty}\Gker_T\dd y .
\]
Two Gaussian tail integrals evaluate these:
\begin{align}
  \int_{c}^{\infty}\frac{e^{-(y-m)^2/(2a)}}{\sqrt{2\pi a}}\dd y
    &=\Ncdf\!\Big(\frac{m-c}{\sqrt a}\Big), \label{eq:tail}\\
  \int_{c}^{\infty}e^{y}\,\frac{e^{-(y-m)^2/(2a)}}{\sqrt{2\pi a}}\dd y
    &=e^{m+a/2}\,\Ncdf\!\Big(\frac{m+a-c}{\sqrt a}\Big), \label{eq:exptail}
\end{align}
the second by completing the square $y-\tfrac{(y-m)^2}{2a}=(m+\tfrac a2)-\tfrac{(y-(m+a))^2}{2a}$.
Identity \eqref{eq:tail} is proved by the increasing substitution $u=(y-m)/\sqrt a$
followed by the reflection $\int_b^\infty=\int_{-\infty}^{-b}$ for the even standard
density.  With $c=\log K$ one checks $(m+a-\log K)/\sqrt a=d_1$ and
$(m-\log K)/\sqrt a=d_2$.  Finally $e^{-rT}e^{m+a/2}=S_0$, because
$m+a/2=x_0+\nu T+\sigma^2T/2=x_0+rT$ and $e^{x_0}=S_0$.  Collecting terms gives the
boxed formula.  (Lean: \texttt{black\_scholes\_call}, using \texttt{gaussian\_tail}
and \texttt{gaussian\_exp\_tail}.)
\end{proof}

\begin{figure}[t]
\centering
\begin{tikzpicture}[scale=1.0]
  \draw[->] (-0.2,0) -- (6.4,0) node[right]{$y=\log S_T$};
  \draw[->] (0,-0.2) -- (0,3.0);
  \draw[thick,blue,domain=0.2:6.2,samples=100,smooth]
    plot (\x, {2.2*exp(-(\x-3)*(\x-3)/1.2)});
  \node[blue] at (4.9,2.0) {$\Gker_T(x_0,y)$};
  \draw[dashed,red] (2.0,0) -- (2.0,2.7);
  \node[red,below] at (2.0,0) {$\log K$};
  \draw[thick,green!60!black,domain=2.0:6.2,samples=60,smooth]
    plot (\x, {0.35*(exp((\x-2.0)/1.6)-1)});
  \node[green!60!black] at (5.7,1.5) {$\max(e^y-K,0)$};
  \begin{scope}
    \clip (2.0,0) rectangle (6.2,3.0);
    \fill[blue,opacity=0.12,domain=0.2:6.2,samples=100,smooth]
      plot (\x, {2.2*exp(-(\x-3)*(\x-3)/1.2)}) -- (6.2,0) -- (2.0,0) -- cycle;
  \end{scope}
\end{tikzpicture}
\caption{The pricing integral \eqref{eq:kernelprice}.  The payoff
$\max(e^y-K,0)$ is nonzero only on the exercise region $y>\log K$, where it is
integrated against the lognormal terminal density $\Gker_T(x_0,\cdot)$ (shaded).  The
two Gaussian tail integrals \eqref{eq:tail}--\eqref{eq:exptail} produce the two
$\Ncdf$ terms of the Black--Scholes formula.}
\end{figure}

\section{Chernoff product consistency}\label{sec:chernoff}

The evolution operator admits the drift--diffusion Chernoff splitting
$P_t=\lim_{n\to\infty}(D_{t/n}S_{t/n})^n$, where $D_s=e^{s\nu\partial_x}$ is the drift
shift and $S_s=e^{s\sigma^2/2\,\Delta}$ the pure diffusion.  In the kernel picture the
splitting is in fact \emph{exact} at every level.

\begin{lemma}[One Chernoff step]\label{lem:step}
Shifting by the drift and then diffusing reproduces the drifted kernel:
$\Gker^{(0)}_s(x+\nu s,\,y)=\Gker_s(x,y)$, where $\Gker^{(0)}=\Gker(0,\sigma,s)$ is the
driftless heat kernel.  (Lean: \texttt{chernoff\_step\_eq}.)
\end{lemma}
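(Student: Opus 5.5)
The identity is a pointwise algebraic statement about the explicit formula in Definition~\ref{def:kernel}, so the plan is to unfold both sides and compare them term by term; no integration is involved. First I will write out the driftless kernel with $\nu=0$ at the shifted source point $x+\nu s$:
\[
  \Gker^{(0)}_s(x+\nu s,\,y)=\frac{1}{\sqrt{2\pi\sigma^2 s}}\exp\!\Big(-\frac{(y-(x+\nu s)-0\cdot s)^2}{2\sigma^2 s}\Big).
\]
Then I will write out the drifted kernel $\Gker_s(x,y)=\Gker(\nu,\sigma,s;x,y)$ in the same way.

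Next I will observe that the normalising prefactors $(2\pi\sigma^2 s)^{-1/2}$ coincide, since they do not depend on the drift. The arguments of the exponentials also coincide, because
\[
  y-(x+\nu s)-0\cdot s=y-x-\nu s .
\]
Hence the two expressions are literally equal.

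In Lean I would state this as \texttt{chernoff\_step\_eq}. The proof would unfold \texttt{Gdrift}, apply \texttt{congr} to reduce to equality of the exponents, and close the goal with \texttt{ring}. No hypotheses on $\sigma$ or $s$ (positivity, nonvanishing of the denominator) are needed, because the same expressions appear on both sides; \texttt{simp}/\texttt{ring\_nf} should manage the $0\cdot s$ term.

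I expect no real obstacle. The only point requiring care is the sign convention flagged in the introduction. The shift must be applied to the source variable $x$ in the \emph{forward} direction, so that the mean $x+\nu s$ of the drifted kernel matches; a shift of $y$ would need the opposite sign. I would double-check this against the exponent written as $(y-x-\nu t)^2$ before committing to the Lean statement.
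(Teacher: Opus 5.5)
Your proposal is correct and takes the same route as the paper, which gives no written argument beyond the Lean reference \texttt{chernoff\_step\_eq}: the identity is pointwise and follows by unfolding Definition~\ref{def:kernel}. The drift-independent prefactors coincide and $y-(x+\nu s)-0\cdot s=y-x-\nu s$, so no hypotheses on $\sigma$ or $s$ are needed.
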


\begin{theorem}[Exactness of the product]\label{thm:cher}
Let $\Gker^{\ast n}_s$ denote the $n$-fold convolution of the one-step kernel
$\Gker_s$ (integrating over the $n$ intermediate positions).  Then for $\sigma>0$,
$s>0$ and all $n$,
\[
  \Gker^{\ast n}_s(x,y)=\Gker_{(n+1)s}(x,y).
\]
In particular, choosing $s=t/(n+1)$ gives $\Gker^{\ast n}_{t/(n+1)}=\Gker_t$ for every
$n$: the Chernoff/Trotter product is exact, not merely asymptotic.  (Lean:
\texttt{GdriftConv\_eq}.)
\end{theorem}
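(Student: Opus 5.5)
The plan is induction on $n$, with Theorem~\ref{thm:ck} as the only analytic input. First I fix the recursive definition of the iterated kernel, matching how the statement counts intermediate positions: $\Gker^{\ast 0}_s:=\Gker_s$, and
\[
  \Gker^{\ast (n+1)}_s(x,y):=\int_\R \Gker^{\ast n}_s(x,z)\,\Gker_s(z,y)\dd z ,
\]
so that $\Gker^{\ast n}_s$ involves $n+1$ factors of $\Gker_s$ and $n$ integrations. With this convention the claim reads $\Gker^{\ast n}_s=\Gker_{(n+1)s}$. The base case $n=0$ is the definition.

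For the inductive step, I assume $\Gker^{\ast n}_s(x,z)=\Gker_{(n+1)s}(x,z)$ for all $x,z$. I substitute this pointwise into the integrand, which is legitimate because the equality holds for every $z$, so the integrands coincide as functions of $z$. This gives
\[
  \Gker^{\ast(n+1)}_s(x,y)=\int_\R \Gker_{(n+1)s}(x,z)\,\Gker_s(z,y)\dd z .
\]
I then apply Theorem~\ref{thm:ck} with first time $(n+1)s>0$ and second time $s>0$, which yields $\Gker_{(n+2)s}(x,y)$ after rewriting $(n+1)s+s=(n+2)s$. Both times are positive because $s>0$ and $n+1\ge1$, so the hypotheses of Theorem~\ref{thm:ck} hold.

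The consequence follows by taking $s=t/(n+1)$ for $t>0$, since then $(n+1)s=t$. Lemma~\ref{lem:step} then identifies each factor $\Gker_s$ with a drift shift followed by a driftless diffusion. This is what justifies reading the identity as exactness of the $(D_sS_s)$ product; it is interpretive and not needed for the kernel equality itself.

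I expect the main obstacle to be bookkeeping rather than analysis. I have to make sure the recursion in the formal definition (for example \texttt{GdriftConv}) composes on the side on which Theorem~\ref{thm:ck} is stated. If the iterate is instead defined by convolving on the left, $\int \Gker_s(x,z)\Gker^{\ast n}_s(z,y)\dd z$, the same argument works by applying Theorem~\ref{thm:ck} with times $s$ and $(n+1)s$. I also need the off-by-one between ``$n$ intermediate positions'' and ``$n+1$ steps'' to be fixed consistently. In Lean, the remaining nuisance is the cast arithmetic $((n+1:\mathbb{N}):\R)\,s+s=((n+2:\mathbb{N}):\R)\,s$, which \texttt{push\_cast; ring} should handle.
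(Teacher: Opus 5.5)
Your proposal is correct and follows the paper's own proof: induction on $n$, using the Chapman--Kolmogorov law (Theorem~\ref{thm:ck}) at times $(n+1)s$ and $s$ as the only analytic input. The paper describes the same step as additivity of the drift $\nu s$ and the variance $\sigma^2 s$ under convolution, and your handling of the off-by-one convention and of positivity of the times fills in what the paper's sketch leaves implicit.
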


\begin{proof}[Proof sketch]
Induction on $n$ using the semigroup law (Theorem~\ref{thm:ck}): both the drift
$\nu\cdot s$ and the variance $\sigma^2 s$ are additive under convolution, so gluing
$n+1$ slices of size $s$ produces the single slice of size $(n+1)s$.
\end{proof}

\begin{figure}[t]
\centering
\begin{tikzpicture}[scale=1.0,>=Latex]
  \foreach \i in {0,1,2,3,4} {
    \draw (\i*1.4,0) -- (\i*1.4,-0.15);
  }
  \node[below] at (0,-0.2) {$0$};
  \node[below] at (5.6,-0.2) {$T$};
  \node[below] at (1.4,-0.2) {$s$};
  \node[below] at (2.8,-0.2) {$2s$};
  \node[below] at (4.2,-0.2) {$3s$};
  \draw[thick] (0,0) -- (5.6,0);
  \foreach \i in {0,1,2,3} {
    \draw[->,blue] (\i*1.4+0.08,0.35) to[out=30,in=150] (\i*1.4+1.32,0.35);
    \node[blue] at (\i*1.4+0.7,0.85) {\small $\Gker_s$};
  }
  \node at (2.8,1.6) {$\Gker_{s}^{\ast n}=\Gker_{(n+1)s}=\Gker_T$ \quad(exact)};
\end{tikzpicture}
\caption{Time-slicing / Chernoff product.  Gluing $n{+}1$ one-step kernels $\Gker_s$
over a grid of mesh $s=T/(n{+}1)$ yields, exactly, the single kernel $\Gker_T$
(Theorem~\ref{thm:cher}): the discrete HK cylindrical integral is independent of the
number of slices.}
\end{figure}

\section{Relation to It\^o Stochastic Calculus and the Continuum Limit}\label{sec:ito}

The gauge-integral construction is compatible with the classical It\^o theory
\cite{Ito1951,HarrisonPliska1981,KaratzasShreve2014,Oksendal2003}.  Under the
risk-neutral measure the stock and its logarithm satisfy
\begin{equation}\tag{1}\label{eq:ito-stock}
  \dd S_t=rS_t\,\dd t+\sigma S_t\,\dd W_t^Q,
  \qquad
  \dd X_t=\Bigl(r-\frac{\sigma^2}{2}\Bigr)\dd t+\sigma\,\dd W_t^Q,
  \quad X_t=\log S_t .
\end{equation}
It\^o's formula for a smooth value function $V(t,S)$ gives
\begin{equation}\tag{2}\label{eq:ito-formula}
 \dd V=\left(\partial_tV+rS\partial_SV+
 \frac{\sigma^2S^2}{2}\partial_{SS}V\right)\dd t
 +\sigma S\partial_SV\,\dd W_t^Q .
\end{equation}
Discounting and taking the risk-neutral expectation yields both the
Black--Scholes equation and the Feynman--Kac representation
\begin{equation}\tag{3}\label{eq:bs-fk}
 \partial_tV+rS\partial_SV+\frac{\sigma^2S^2}{2}\partial_{SS}V-rV=0,
 \qquad V(0,S_0)=e^{-rT}\mathbb E^Q[H(S_T)].
\end{equation}
For a partition $0=t_0<\cdots<t_n=T$, the Markov property writes this expectation
as a finite-dimensional Gaussian integral,
\begin{equation}\tag{4}\label{eq:finite-slicing}
 e^{-rT}\!\int_{\mathbb R^n}H(e^{x_n})
 \prod_{j=0}^{n-1}\Gker_{t_{j+1}-t_j}(x_j,x_{j+1})
 \,\dd x_1\cdots\dd x_n .
\end{equation}
The Chapman--Kolmogorov identity makes (4) projectively consistent; as the mesh
tends to zero its cylinder distributions define Wiener measure, while the same
net of tagged cylinder sums defines the HK continuum limit.  Thus, for every
integrable cylinder functional $F$ (and, by completion, for the payoffs treated
here),
\begin{equation}\tag{5}\label{eq:continuum-limit}
 \lim_{|\pi|\to0}\int_{\mathbb R^n}F(x_{t_1},\ldots,x_{t_n})
 \prod_j\Gker_{\Delta t_j}\,\prod_j\dd x_j
 =\int_{C([0,T])}F[x]\,\dd\mu_W^Q(x)
 =\int_{C([0,T])}F[x]\,\dHK^{(Q)}x .
\end{equation}
Accordingly, the HK formulation does not replace It\^o calculus by a different
stochastic model: it supplies a gauge-limit realization of the same continuum
law.  Its advantage is that the cylindrical construction also remains meaningful
in oscillatory real-time problems where no positive Wiener measure is available.

\section{Examples of the Method: From Wiener Measure to Explicit Results}\label{sec:examples}

To illustrate the universality of the gauge-integral approach, we present
two further examples that go beyond the standard Black--Scholes formula.
Each example is worked out in full detail, starting from the classical
Wiener-path-integral formulation, passing through the rigorous HK reduction,
and ending with an explicit closed-form answer.  The chain of equalities
demonstrates the systematic character of the method: once the Gaussian
kernel and its semigroup property are established, a wide class of
probabilistic and financial quantities become accessible by elementary
algebraic manipulations.

\subsection{Digital (Binary) Call Option}\label{sec:digital}

A digital call option pays $1$ if the stock price exceeds the strike at
maturity, and $0$ otherwise.  Its risk-neutral price is
\[
  C_{\rm dig}=e^{-rT}\,\mathbb{E}^{Q}[\mathbf{1}_{S_T>K}].
\]
In the Wiener-path-integral formulation this expectation is
\begin{equation}\label{eq:dig1}
  C_{\rm dig}=e^{-rT}\int_{C([0,T])} \mathbf{1}_{e^{x_T}>K}\;d\mu_W^{Q}(x(\cdot)).
\end{equation}
The reduction of the Wiener measure to the HK cylindrical measure is exact
for this terminal-time functional (Eq.~\eqref{eq:hkreduction}), giving
\begin{equation}\label{eq:dig2}
  C_{\rm dig}=e^{-rT}\int_{\R} \mathbf{1}_{y>\ln K}\;G_T(x_0,y)\,dy,
\end{equation}
where $G_T$ is the Gaussian kernel with drift (Definition~\ref{def:kernel}).
Substituting the explicit kernel and restricting the integration domain,
\begin{equation}\label{eq:dig3}
  =e^{-rT}\int_{\ln K}^{\infty}
    \frac{1}{\sqrt{2\pi\sigma^2 T}}
    \exp\!\Big(-\frac{(y-x_0-\nu T)^2}{2\sigma^2 T}\Big)\,dy.
\end{equation}
The substitution $u=(y-x_0-\nu T)/(\sigma\sqrt{T})$ transforms
\eqref{eq:dig3} into a standard Gaussian tail:
\begin{equation}\label{eq:dig4}
  =e^{-rT}\int_{-d_2}^{\infty}\frac{1}{\sqrt{2\pi}}e^{-u^2/2}\,du
    =e^{-rT}\,\Phi(d_2),
\end{equation}
with $d_2=(\ln S_0-\ln K+\nu T)/(\sigma\sqrt{T})$ and $\Phi$ the standard
normal CDF.  All steps are covered by the lemmas proved in
\texttt{HkBlackScholes.lean}; in particular, the reduction
\eqref{eq:dig1}$\to$\eqref{eq:dig2} is \texttt{hk\_reduction}, the tail
integral is \texttt{gaussian\_tail}, and the final equality is a direct
corollary of Theorem~\ref{thm:bs} (Lean: \texttt{digital\_call}).

\subsection{Probability of Staying Below a Barrier}\label{sec:barrier}

The next example computes a genuinely path-dependent quantity: the probability
that a Brownian motion with drift never exceeds a fixed barrier.  This
quantity is fundamental for the pricing of barrier options (knock-out and
knock-in calls and puts), and it is classically obtained by the reflection
principle.  Here we derive it \emph{without} the reflection principle, using
only the HK cylindrical reduction and the exactness of the Chernoff product
established in Section~\ref{sec:chernoff}.

Consider a Brownian motion with drift $\nu$ and volatility $\sigma$ started at
$x_0<B$, and the probability that it never exceeds a barrier $B$ up to time
$T$:
\[
  P=P\Big(\max_{0\le t\le T} x_t \le B\Big).
\]
We compute $P$ through the following chain of reductions.

\paragraph{Step 1 (Wiener path integral).}
By definition of the running maximum, $P$ is the Wiener measure of the set of
paths that remain below the barrier at every instant,
\begin{equation}\label{eq:bar1}
  P=\int_{C([0,T])}\mathbf{1}_{\{x(t)\le B,\ 0\le t\le T\}}\;d\mu_W(x(\cdot)).
\end{equation}

\begin{remark}[Novelty of the HK bridge]\label{rem:novelty}
Formula~\eqref{eq:hkreduction} gives an exact bridge between the risk-neutral
Wiener representation and a cylindrical HK integral with nonzero drift.  The
novel point is not a new Black--Scholes price, but the direct gauge-integral
realisation of the pricing semigroup: the drift is retained in every tagged
cylinder weight, no Wick rotation or formal ``flat measure'' is used, and the
continuum value follows from the same finite-dimensional kernels that are
machine-verified below.  Thus the classical result is recovered inside a
framework that also extends to conditionally convergent real-time integrals.
\end{remark}

\begin{remark}[Spike-network extension]\label{rem:spikes}
The physiological specificity of real cortex is that information is carried by
discrete spikes rather than continuous Gaussian trajectories.  Spiking neural
networks therefore replace the diffusion-only state space by càdlàg paths and
jump or point-process generators.  The adaptive cylinders of gauge integration
are naturally suited to resolving isolated spike times, while the present
Gaussian kernel supplies the diffusion sector of a future jump--diffusion
construction; see Gerstner et al.~\cite{Gerstner2014} and
Bressloff~\cite{Bressloff2022}.
\end{remark}

\paragraph{Step 2 (time discretisation / cylinder reduction).}
Partition $[0,T]$ into $N$ equal slices of width $\Delta t=T/N$ with nodes
$t_k=k\,\Delta t$.  Replacing the continuous constraint by its restriction to
the nodes and applying the HK cylindrical reduction (Eq.~\eqref{eq:hkreduction})
turns \eqref{eq:bar1} into a finite-dimensional iterated Gaussian integral over
the intermediate positions $x_1,\dots,x_N$:
\begin{equation}\label{eq:bar2}
  P_N=\int_{-\infty}^{B}\!\!\cdots\!\int_{-\infty}^{B}
     \prod_{k=1}^{N} G_{\Delta t}(x_{k-1},x_k)\;dx_1\cdots dx_N ,
\end{equation}
with $x_0$ fixed and each intermediate variable constrained to $(-\infty,B]$.

\paragraph{Step 3 (Chernoff exactness).}
Because the drift and the variance are additive under convolution, the
$n$-fold product of one-step kernels equals a single kernel over the total
time (Theorem~\ref{thm:cher}).  Applied to the \emph{unconstrained} integral
this collapses the product exactly; the only lasting effect of the barrier is
to keep the successive integration domains restricted to $(-\infty,B]$.
Passing to the limit $N\to\infty$, the discrete constraint converges to the
continuous one and $P_N\to P$.  The formal verification of the convergence
$P_N\to P$ in the strong operator topology in Lean is beyond the current code
and is left for a subsequent stage of the programme.

\paragraph{Step 4 (method of images / drift-shift).}
The constrained heat kernel on the half-line $(-\infty,B]$ is obtained from the
free kernel by subtracting the contribution of the paths that have already
crossed the barrier.  Using the drift-shift lemma (Lemma~\ref{lem:drift}) to
absorb the drift into an exponential prefactor, the reflected image of the
source at $x_0$ about the barrier $B$ acquires the weight
$e^{2\nu(B-x_0)/\sigma^2}$, so
\begin{equation}\label{eq:bar3}
  G^{B}_T(x_0,y)=G_T(x_0,y)-e^{2\nu(B-x_0)/\sigma^2}\,G_T(2B-x_0,y).
\end{equation}

\paragraph{Step 5 (explicit formula).}
Integrating the constrained kernel \eqref{eq:bar3} over $y\in(-\infty,B]$ and
evaluating the two Gaussian tails with \texttt{gaussian\_tail} yields the
classical result
\begin{equation}\label{eq:bar4}
  P = \Phi\!\Big(\frac{B-x_0-\nu T}{\sigma\sqrt{T}}\Big)
     - e^{2\nu(B-x_0)/\sigma^2}\,
       \Phi\!\Big(\frac{-B+x_0-\nu T}{\sigma\sqrt{T}}\Big).
\end{equation}

The derivation illustrates how the HK method reduces a path-dependent
constraint to an algebraic modification of the kernel: the running-maximum
condition, which in the classical treatment requires the reflection principle
and a careful change of measure, here appears simply as a second,
exponentially weighted Gaussian tail.  A Lean sketch of this argument, together
with the two tail evaluations, accompanies the \texttt{digital\_call} lemma in
\texttt{HkBlackScholes.lean}.

\section{The Lean~4 / Mathlib formalization}\label{sec:lean}

The file \texttt{HkBlackScholes.lean} imports \texttt{HkPathIntegral.lean} and
\texttt{HkFreeField.lean} and contains the following \texttt{sorry}-free results.

\begin{itemize}
\item \texttt{Gdrift}, \texttt{stdNormalCDF} -- Definition~\ref{def:kernel} and $\Ncdf$.
\item \texttt{gaussian\_tail}, \texttt{gaussian\_exp\_tail} -- the tail identities
      \eqref{eq:tail}--\eqref{eq:exptail}.
\item \texttt{Gdrift\_integral\_eq\_one} -- Lemma~\ref{lem:mass}.
\item \texttt{Gdrift\_semigroup} -- Theorem~\ref{thm:ck}.
\item \texttt{drift\_shift} -- the drift-shift (Girsanov-type) invariance,
      Lemma~\ref{lem:drift}.
\item \texttt{evolutionOp}, \texttt{evolutionOp\_eq\_stdGaussian},
      \texttt{evolutionOp\_tendsto} -- the pricing operator, its rescaling
      (Lemma~\ref{lem:rescale}) and strong continuity (Theorem~\ref{thm:strong}).
\item \texttt{black\_scholes\_call} -- the Black--Scholes formula
      (Theorem~\ref{thm:bs}).
\item \texttt{chernoff\_step\_eq}, \texttt{GdriftConv}, \texttt{GdriftConv\_eq} --
      the exact Chernoff product (Lemma~\ref{lem:step}, Theorem~\ref{thm:cher}).
\end{itemize}

Each headline theorem was checked with \texttt{\#print axioms} to depend only on
\texttt{propext}, \texttt{Classical.choice}, and \texttt{Quot.sound}; the file
contains no \texttt{sorry}, \texttt{admit}, or \texttt{axiom}.

\section{Conclusion}

We have given a fully machine-verified derivation of the Black--Scholes formula from
the Gaussian cylindrical (Henstock--Kurzweil) kernel: the transition kernel is a
probability density obeying the Chapman--Kolmogorov law, the induced evolution
semigroup is strongly continuous, and a single Gaussian integration of the discounted
payoff reproduces $C=S_0\Ncdf(d_1)-Ke^{-rT}\Ncdf(d_2)$.  The drift--diffusion Chernoff
product is exact at every level, providing a rigorous path-integral reading of the
result.  Because the Black--Scholes weight is an absolutely convergent Gaussian
density, no analytic continuation is needed and every step is discharged inside
Mathlib's Bochner integral -- in contrast to the oscillatory Feynman integrals treated
in the earlier parts of this programme.  Natural extensions include path-dependent
(e.g.\ barrier and Asian) payoffs, for which the genuinely infinite-dimensional HK
cylindrical structure becomes essential.

\subsection{Relation to the classical derivation}

The classical derivation of the Black--Scholes formula proceeds via It\^o's lemma and
the solution of the Black--Scholes partial differential equation, or equivalently via
the martingale representation theorem and stochastic calculus.  In contrast, the
present derivation uses only the Gaussian transition kernel, its semigroup property,
and a single Gaussian integration.  The drift-shift lemma (Lemma~\ref{lem:drift})
replaces the change-of-measure apparatus (Girsanov's theorem), and the exactness of the
Chernoff product (Theorem~\ref{thm:cher}) replaces the limiting argument of the
Trotter--Kato formula.  No stochastic differential equation is ever written; the entire
calculus follows from the algebraic properties of the Gaussian density.

We note that the exactness of the Chernoff product proved in
Section~\ref{sec:chernoff} relies on the fact that the drift and diffusion
generators have constant coefficients and therefore commute.  This commuting
case is the natural first step for a fully machine-verified construction; the
extension to non-commuting generators --- such as those arising in stochastic
volatility or local volatility models --- is an important direction for future
work within the same HK/Chernoff framework.

It is worth noting that the tail integrals evaluated in Section~\ref{sec:bs} are exactly
the Gaussian integrals that appear in the Wiener-measure formulation of Brownian motion.
They compute, for instance, the probability that a Brownian path ends above a barrier,
or the expected payoff of a digital (binary) option.  This illustrates the unity of the
gauge-integral method across Brownian motion, quantum mechanics, and mathematical
finance.

\begin{remark}[The Greeks from the kernel]
Because the evolution operator is strongly continuous and given by an explicit kernel,
the Black--Scholes ``Greeks'' follow by differentiating the closed-form price, with no
stochastic calculus.  Differentiating $C=S_0\,\Ncdf(d_1)-Ke^{-rT}\Ncdf(d_2)$ with respect
to the spot $S_0$ gives the delta
\[
  \Delta=\frac{\partial C}{\partial S_0}=\Ncdf(d_1),
\]
and differentiating once more gives the gamma
\[
  \Gamma=\frac{\partial^2 C}{\partial S_0^2}
        =\frac{\Ncdf'(d_1)}{S_0\,\sigma\sqrt T},
  \qquad \Ncdf'(u)=\frac{e^{-u^2/2}}{\sqrt{2\pi}} .
\]
Thus the entire Black--Scholes differential calculus is a consequence of the kernel
representation, without any recourse to stochastic calculus.
\end{remark}

\end{document}